\documentclass[a4paper,10pt]{paper}
\usepackage{mathrsfs}
\usepackage{amsmath,amssymb,amsthm}
\usepackage{cite}
\usepackage{color}
\usepackage{hyperref}
\usepackage{tikz}

\textwidth=16cm
\textheight=22cm
\evensidemargin=-.4cm
\oddsidemargin=-.4cm

\newtheorem{proposition}{\bf Proposition}[section]
\numberwithin{equation}{section}

\newcommand{\bse}{\begin{subequations}}
\newcommand{\ese}{\end{subequations}}

\DeclareMathOperator{\diag}{Diag}
\def\undertilde#1{\mathord{\vtop{\ialign{##\crcr
$\hfil\displaystyle{#1}\hfil$\crcr\noalign{\kern1.5pt\nointerlineskip}
$\hfil\widetilde{}\hfil$\crcr\noalign{\kern-6.5pt}}}}}
\def\underhat#1{\mathord{\vtop{\ialign{##\crcr
$\hfil\displaystyle{#1}\hfil$\crcr\noalign{\kern1.5pt\nointerlineskip}
$\hfil\widehat{}\hfil$\crcr\noalign{\kern-6.5pt}}}}}
\def\underbar#1{\mathord{\vtop{\ialign{##\crcr
$\hfil\displaystyle{#1}\hfil$\crcr\noalign{\kern1.5pt\nointerlineskip}
$\hfil\bar{}\hfil$\crcr\noalign{\kern-6.5pt}}}}}

\def\wt#1{\widetilde{#1}}
\def\wh#1{\widehat{#1}}
\def\wb#1{\bar{#1}}
\def\th#1{\wh{\wt{#1}}}
\def\hb#1{\wb{\wh{#1}}}
\def\bt#1{\wt{\wb{#1}}}
\def\thb#1{\wb{\wh{\wt{#1}}}}
\def\ut#1{\undertilde{#1}}
\def\uh#1{\underhat{#1}}
\def\ub#1{\underbar{#1}}
\def\uth#1{\ut{\uh{#1}}}

\newcommand{\rT}{\mathrm{T}}
\newcommand{\rd}{\mathrm{d}}

\newcommand{\cN}{\mathcal{N}}
\newcommand{\cF}{\mathcal{F}}

\newcommand{\bOa}{\mathbf{\Omega}}

\newcommand{\bC}{\mathbf{C}}

\newcommand{\bU}{\mathbf{U}}

\newcommand{\bu}{\mathbf{u}}

\newcommand{\bc}{\mathbf{c}}
\newcommand{\tbc}{{}^{t\!}\mathbf{c}}
\newcommand{\bA}{\mathbf{A}}

\newcommand{\bM}{\mathbf{M}}

\newcommand{\ld}{\lambda}

\newcommand{\oa}{\omega}
\newcommand{\Oa}{\Omega}

\title{On reductions of the discrete Kadomtsev--Petviashvili-type equations}
\author{Wei FU and Frank W NIJHOFF \\
 School of Mathematics, University of Leeds, Leeds LS2 9JT, UK \\
}

\begin{document}

\maketitle

\begin{abstract}
The reduction by restricting the spectral parameters $k$ and $k'$ on a generic algebraic curve of degree $\mathcal{N}$ is performed for the discrete AKP, 
BKP and CKP equations, respectively. A variety of two-dimensional discrete integrable systems possessing a more general solution structure arise from the 
reduction, and in each case a unified formula for generic positive integer $\mathcal{N}\geq 2$ is given to express the corresponding reduced integrable 
lattice equations. The obtained extended two-dimensional lattice models give rise to many important integrable partial difference equations as special 
degenerations. Some new integrable lattice models such as the discrete Sawada--Kotera, Kaup--Kupershmidt and Hirota--Satsuma equations in extended form 
are given as examples within the framework. 
\paragraph{Keywords:} discrete integrable system, direct linearisation, multi-dimensional consistency, reduction, $\tau$-function, discrete KP-type equations. 
\end{abstract}

\section{Introduction}\label{S:Intro}
Discrete equations are often considered to play the role of master models in the theory of integrable systems due to their rich algebraic structure, 
namely a single discrete integrable equation encodes the information of the whole hierarchy of the corresponding continuous equation. More importantly, 
the integrability of discrete equations also arises in modern mathematics and physics (see e.g. \cite{HJN16}). A key feature of discrete integrable systems 
is the property of the so-called \emph{multi-dimensional consistency} (MDC) -- a lattice equation can be consistently embedded into a higher-dimensional 
lattice space \cite{DS97,NW01} (see also \cite{ABS03,ABS12}). An even deeper understanding is that the consistent discrete equations are also compatible 
with their corresponding hierarchy of continuous equations; more precisely, discrete and continuous independent variables are on the same footing and 
consequently both discrete and continuous equations can be simultaneously embedded into an infinite-dimensional space spanned by both discrete and 
continuous coordinates. Following this idea, by integrability of an equation we mean that there exists a common explicit solution admitting infinite 
degrees of freedom for both discrete and continuous hierarchies. 

Studying three-dimensional (3D) integrable lattice models might be a very proper and effective path to understand the theory of integrable systems -- the 
higher dimension brings richer structure and the integrability structure of two-dimensional (2D) discrete systems is hidden in that of 3D lattice equations. 
Among 3D discrete integrable systems, there are three typical lattice equations of the Kadomtsev--Petviashvili-type (KP), namely the discrete AKP \cite{Hir81}, 
BKP \cite{Miw82} and CKP \cite{Kas96,Sch03} equations, where the letters `A', `B' and `C' follow from the Kyoto school's transformation group theory of 
soliton equations, see \cite{JM83} and references therein. The discrete KP equations are shown to possess the MDC property \cite{ABS12,HJN16,TW09,Atk11}, 
and this can also be understood under a unified framework for both discrete and continuous KP hierarchies \cite{FN16,FN17}. The continuous KP hierarchies 
play the role of master models in the continuous theory since they reduce to all the soliton hierarchies associated with scalar differential spectral problems, 
see e.g. \cite{JM83} and \cite{FN17}. Therefore it is an interesting question to understand how reductions work for the discrete KP equations and consequently 
to establish the corresponding discrete theory. 

Reductions of the discrete KP equations apart from the discrete CKP equation have been considered. One approach is to perform reductions on the 
$\tau$-functions in their bilinear form, namely simple constraints can be imposed on the $\tau$-functions describing the bilinear 3D lattice 
equations and as a result many 2D discrete integrable equations such as the the discrete Korteweg--de Vries (KdV), Boussinesq (BSQ), Sawada--Kotera (SK) 
equations are obtained (see e.g. \cite{DJM83,TH96,HZ13}). The idea behind such a procedure is actually a reduction on the two spectral parameters $k$ and $k'$, 
namely they are related through an algebraic curve. However, in those works, some fixed lattice parameters are involved in the algebraic curve, i.e. 
the preselected discrete variables and their associated lattice parameters are treated specifically compared with the others. This implies that the key 
point that all the independent variables must be on the same footing is broken, and consequently solution structures of the reduced 2D lattice models 
become more restrictive. Instead, a trilinear discrete BSQ equation as an example was derived from the \emph{direct linearisation} (DL) approach recently 
\cite{ZZN12}, where the associated algebraic curve is a generic one (independent of lattice parameters). The trilinear discrete BSQ equation possesses the 
most general solution structure since no restriction on the lattice parameters is required. This observation motivates us to reconsider reductions of the 
discrete KP equations generically. 

In a recent paper \cite{FN16}, the DL framework was established for the discrete AKP, BKP and CKP equations and their integrability. This unified framework 
enables us to study reductions of the discrete KP equations from this view point in the current paper. The starting point is a generic algebraic curve taking 
the form of 
\begin{align}\label{AlgCurve}
C(t,k)\doteq P(t)-P(k)=0, \quad \hbox{where} \quad P(t)=\sum_{j=1}^\cN\alpha_jt^j \quad \hbox{for} \quad \cN=2,3,\cdots. 
\end{align}
Without losing generality, we take $\alpha_\cN=1$, namely the polynomial $P(k)$ is monic. Since $P(t)$ is a polynomial of degree $\cN$, $P(t)-P(k)$ can be 
theoretically factorised as $\prod_{j=1}^\cN(t-\oa_j(k))$ where $\oa_j(k)$ depend on the parameters $\alpha_j$. This implies that Equation \eqref{AlgCurve} 
as an algebraic equation of $t$ has $\cN$ roots $t=\oa_j(k)$. It is worth pointing out that the curve degenerates when $\alpha_j=0$ for $j=1,2,\cdots,\cN-1$ 
and in this case we have the roots $t=\oa^j k$ for $j=1,2,\cdots$, where $\oa$ is one of the primitive $\cN$th roots of unity defined by 
$\oa=\exp(2\pi\mathfrak{i}/\cN)$ with $\mathfrak{i}$ the imaginary unit. In order to avoid the ambiguity in the future, we require that 
$\oa_j(k)|_{\alpha_1=\cdots=\alpha_{\cN-1}=0}=\oa^j k$ for $j=1,2,\cdots,\cN$. Hence the generic curve \eqref{AlgCurve} has $\cN-1$ primitive roots 
$t=\oa_j(k)$ for $j=1,2,\cdots,\cN-1$ (because more parameters $\alpha_j$ are introduced in the curve when the degree $\cN$ increases) and the non-primitive 
root $\oa_\cN(k)=k$ as it solves \eqref{AlgCurve} for any $\cN$. 

The reduction is performed by restricting the spectral parameters $k$ and $k'$ on the algebraic curve $C(-k',k)=0$, namely by taking any of the primitive 
roots $k'=-\oa_j(k)$ with $j=1,2,\cdots,\cN-1$ (or their combination as we will see later). As a result, 2D integrable lattice models in their extended form 
in terms of the extra parameters $\alpha_j$ arise from such a reduction where the extra parameters $\alpha_j$, $j=1,2,\cdots,\cN-1$ are introduced into the 
discrete systems. The degeneration $\alpha_j=0$ gives rise to $k'=-\oa^jk$, which is corresponding to the unextended discrete systems. One can observe that 
no lattice parameters are involved in such a reduction and therefore the 2D lattice models possess the most general solution structure. For each KP-type 
equation, a unified expression (coupled system form) for the reduced 2D integrable lattice equations in extended form is given for generic positive integer 
$\cN\geq 2$. Examples include the discrete KdV and BSQ equations, a new discretisation of the SK equation, and some 2D discrete integrable systems which 
have not been investigated in the literature before, such as the discrete Kaup--Kupershmidt (KK) and Hirota--Satsuma (HS) equations. 

The paper is organised as follows. In Section \ref{S:DIS}, we briefly review the general structure of discrete and continuous integrable hierarchies, and 
explain the MDC property of integrable lattice equations, from the view point of the DL. In Section \ref{S:Comparison}, a comparison between the old and 
new reductions is made, illustrated by the discrete BSQ and KdV equations. The main result, namely a general reduction scheme, is proposed in Section 
\ref{S:Reduc}, and examples including the discrete BSQ, SK, KK and HS equations in extended form together with their general solution structure are also 
given in the section. 

\section{General structure of discrete and continuous integrable hierarchies}\label{S:DIS}

Let us first introduce some short-hand notations in the discrete theory for convenience. Suppose $f\doteq f(n_1,n_2,n_3,\cdots)$ is a function of 
discrete variables $\{n_1,n_2,n_3,\cdots\}$ associated with their corresponding parameters $\{p_1,p_2,p_3,\cdots\}$, the discrete forward and backward 
shift operators are defined by 
\begin{align*}
 \rT_{p_\gamma}f(n_1,\cdots,n_\gamma,\cdots)\doteq f(n_1,\cdots,n_\gamma+1,\cdots), \quad 
 \rT_{p_\gamma}^{-1}f(n_1,\cdots,n_\gamma,\cdots)\doteq f(n_1,\cdots,n_\gamma-1,\cdots)
\end{align*}
for $\gamma=1,2,\cdots$. Conventionally, we identify the discrete variables $n_1\equiv n$, $n_2\equiv m$, $n_3\equiv h$ and the associated lattice parameters 
$p_1\equiv p$, $p_2\equiv q$, $p_3\equiv r$. As a result, some short-hand notations are introduced as follows: 
\begin{align*}
 \wt f\doteq \rT_p f, \quad 
 \wh f\doteq \rT_q f, \quad 
 \wb f\doteq \rT_r f, \quad 
 \ut f\doteq \rT_p^{-1} f, \quad 
 \uh f\doteq \rT_q^{-1} f, \quad 
 \ub f\doteq \rT_r^{-1} f.
\end{align*}
Combinations of the tilde-, hat- and bar-signs (which will be used everywhere in the rest of this paper) denote the compositions of their corresponding 
discrete shift operations. 
\subsection{Direct linearisation}\label{S:DL}

From the view point of the DL, 3D discrete and continuous integrable hierarchies are governed by a linear integral equation taking the form of 
\begin{align}\label{IntEq}
 \bu_k+\iint_D\rd\zeta(l,l')\rho_k\Oa_{k,l'}\sigma_{l'}\bu_l=\rho_k\bc_k,
\end{align}
where $\bu_k$ and $\bc_k$ are infinite column vectors having components $u_k^{(i)}=u_k^{(i)}(n_1,n_2,n_3,\cdots;x_1,x_2,x_3,\cdots;k)$ and $c_k^{(i)}=k^i$, 
respectively. The plane wave factors $\rho_k$ and $\sigma_{l'}$ as functions of the discrete independent variables $\{n_\gamma|\gamma\in\mathbb{Z}^+\}$ 
and their associated lattice parameters $\{p_\gamma|\gamma\in\mathbb{Z}^+\}$, as well as the continuous independent variables 
$\{x_\gamma|\gamma\in\mathbb{Z}^+\}$, depend on the spectral variables $k$ and $l'$, respectively, and they describe the linear behaviour of the 
corresponding nonlinear integrable equations, and the Cauchy kernel $\Oa_{k,l'}$ as an expression of only the spectral variables $k$ and $l'$ reflects the 
nonlinear behaviour. 

The key point is the separation of $l$ and $l'$ in the measure on the integration domain $D$ which is a reflection of an underlying nonlocal Riemann--Hilbert 
(or $\bar\partial$-) problem leading to 3D integrable hierarchies -- the term `nonlocal Riemann--Hilbert problem' was introduced by Zakharov and Manakov, see 
e.g. \cite{ZM85}; in other words, a double integral is necessary in the linear integral equation for 3D discrete and continuous integrable equations. In 
addition, both discrete and continuous independent variables here should be considered being on the same footing and therefore for the most general structure 
no restriction can be made on the lattice parameters. For the same reason, the Cauchy kernel $\Oa_{k,l'}$ must be independent of the lattice parameters as well. 

In the DL method, by selecting a suitable Cauchy kernel $\Oa_{k,l'}$, and its compatible plane wave factors $\rho_k$ and $\sigma_{l'}$, as well as a proper 
measure $\rd\zeta(l,l')$ in the linear integral equation \eqref{IntEq}, a class of discrete and continuous integrable hierarchies can be constructed from 
the underlying structure. The resulting discrete and continuous equations are integrable in the sense of possessing Lax pairs as well as explicit solutions. 
More precisely, for given $\Oa_{k,k'}$, $\rho_k$ and $\sigma_{k'}$, we can define an operator $\bOa$ and infinite matrices $\bC$ and $\bU$ by the following: 
\begin{align}\label{Sol}
 \Oa_{k,k'}=\tbc_{k'}\cdot\bOa\cdot\bc_k, \quad \bC=\iint_D\rd\zeta(k,k')\rho_k\bc_k\tbc_{k'}\sigma_{k'}, \quad 
 \bU=\iint_D\rd\zeta(k,k')\bu_k\tbc_{k'}\sigma_{k'},
\end{align}
where $\tbc_{k'}$ is an infinite row vector as the transpose of $\bc_{k'}$. Then the linear problems (Lax pairs), the nonlinear forms and the 
bilinear/multilinear forms of the corresponding discrete and continuous hierarchies can be recovered from the quantities $\bu_k$, $\bU$ and 
$\det(1+\bOa\cdot\bC)\doteq\tau$ (or sometimes $\tau^2$), respectively, within the DL framework. For more details, we refer the reader to \cite{FN16,FN17}. 

The double integral in $\bU$ and $\bC$ provides more general solution structure of the 3D integrable lattice and continuous equations. Suitable choices of 
the measure can indeed bring us particular classes of explicit solutions to the nonlinear/multilinear equations, e.g. soliton solutions. For the sake of 
convenience, in the paper we only discuss the integrability structure of a discrete equation from the perspective of the $\tau$-function as it provides 
sufficient information to understand its explicit solution and the MDC property.

\subsection{Discrete and continuous Kadomtsev--Petviashvili-type hierarchies}\label{S:KP}

The three typical scalar 3D integrable models, namely the discrete and continuous AKP, BKP and CKP, can be recovered by choosing the following Cauchy kernels, 
plane wave factors as well as measures. Since this paper is contributed to reductions of the discrete KP-type equations, we do not list the continuous equations 
here, but the continuous structures will still be given for comparison. 

\paragraph{AKP hierarchy.} For the discrete and continuous AKP hierarchies, we choose the Cauchy kernel and the plane wave factors as follows: 
\begin{align}\label{AKP:Info}
 \Oa_{k,k'}=\frac{1}{k+k'}, \,
 \rho_k=\prod_{\gamma=1}^{\infty}(p_\gamma+k)^{n_\gamma}\exp\left(\sum_{\gamma=1}^{\infty}k^\gamma x_\gamma\right), \,
 \sigma_{k'}=\prod_{\gamma=1}^{\infty}(p_\gamma-k')^{-n_\gamma}\exp\left(-\sum_{\gamma=1}^{\infty}(-k')^\gamma x_\gamma\right),
\end{align}
and in this case, there is no particular requirement for the measure $\rd\zeta(k,k')$. The $\tau$-function for soliton solution given by 
\begin{align}\label{AKP:Sol}
 \tau=\det(1+\bA\bM), \quad (\bM)_{j,i}=M_{j,i}=\frac{\rho_{k_i}\sigma_{k'_j}}{k_i+k'_j}, \quad i=1,2,\cdots,N,\quad j=1,2,\cdots,N',
\end{align}
where $\bA$ with entries $A_{i,j}$ is an arbitrary $N\times N'$ non-degenerate matrix, solves simultaneously the discrete, semi-discrete and continuous 
AKP hierarchies, see\cite{FN16,FN17}. The $\tau$-function obeys an infinite copies of the same 3D discrete equation up to lattice directions and parameters 
-- this is the MDC property (which will be explained later) and the whole family of discrete equations should be understood as the discrete AKP hierarchy. 
Without loss of generality, we write down the discrete equation involving the lattice directions $n_1$, $n_2$ and $n_3$ and it takes the form of  
\begin{align}\label{dAKP}
 (p-q)\wb\tau\th\tau+(q-r)\wt\tau\hb\tau+(r-p)\wh\tau\bt\tau=0.
\end{align}
This is the bilinear discrete AKP equation, which was originally proposed by Hirota \cite{Hir81} and was entitled the discrete analogue of the generalised 
Toda equation (DAGTE). And the equation is now also known as the Hirota--Miwa equation due to Miwa's observation that all the members in the continuous KP 
hierarchy are actually embedded in this single equation \cite{Miw82}. 
\paragraph{BKP hierarchy.}
For the discrete BKP hierarchy, we choose the Cauchy kernel and the plane wave factors and impose the antisymmetry property on the measure as follows:
\begin{align}\label{BKP:Info}
 \Oa_{k,k'}=\frac{1}{2}\frac{k-k'}{k+k'}, \,
 \rho_k=\prod_{\gamma=1}^{\infty}\left(\frac{p_\gamma+k}{p_\gamma-k}\right)^{n_\gamma}\exp\left(\sum_{\gamma=0}^{\infty}k^{2\gamma+1}x_{2\gamma+1}\right), \,
 \sigma_{k'}=\rho_{k'}, \, \rd\zeta(k,k')=-\rd\zeta(k',k).
\end{align}
Then the DL framework reveals that the $\tau$-function for soliton solution given by 
\begin{align}\label{BKP:Sol}
 \tau^2=\det(1+\bA\bM), \quad (\bM)_{j,i}=M_{j,i}=\rho_{k_i}\frac{1}{2}\frac{k_i-k'_j}{k_i+k'_j}\sigma_{k'_j}, \quad A_{i,j}=-A_{j,i}, \quad i,j=1,2,\cdots,2N,
\end{align}
solves the discrete, semi-discrete as well as continuous BKP hierarchies (see \cite{FN16} for the discrete one and \cite{FN17} for the continuous one). 
The discrete equation describing the dynamical evolutions with respect to the lattice directions $n_1$, $n_2$ and $n_3$ takes the form of 
\begin{align}\label{dBKP}
 &(p-q)(q-r)(r-p)\tau\thb\tau+(p+q)(p+r)(q-r)\wt\tau\hb\tau \nonumber \\
 &\quad+(r+p)(r+q)(p-q)\wb\tau\th\tau+(q+r)(q+p)(r-p)\wh\tau\bt\tau=0,
\end{align}
which is usually referred to as the Miwa equation \cite{Miw82}. In addition, the $\tau$-function can be expressed by a Pfaffian due to the antisymmetry of 
the matrices $\bA$ and $\bM$. 

\paragraph{CKP hierarchy.}
The following Cauchy kernel, plane wave factors and measure are chosen for the discrete and continuous CKP hierarchies: 
\begin{align}\label{CKP:Info}
 \Oa_{k,k'}=\frac{1}{k+k'}, \,
 \rho_k=\prod_{\gamma=1}^{\infty}\left(\frac{p_\gamma+k}{p_\gamma-k}\right)^{n_\gamma}\exp\left(\sum_{\gamma=0}^{\infty}k^{2\gamma+1}x_{2\gamma+1}\right), \,
 \sigma_{k'}=\rho_{k'}, \,  \rd\zeta(k,k')=\rd\zeta(k',k).
\end{align}
The $\tau$-function for soliton solution in this case is given by 
\begin{align}\label{CKP:Sol}
 \tau=\det(1+\bA\bM), \quad (\bM)_{j,i}=M_{j,i}=\frac{\rho_{k_i}\sigma_{k'_j}}{k_i+k'_j}, \quad A_{i,j}=A_{j,i}, \quad i,j=1,2,\cdots,N,
\end{align}
and it was proven in \cite{FN16,FN17} that the $\tau$-function solves the discrete, semi-discrete and continuous CKP hierarchies. An example 
involving the lattice variables $n_1$, $n_2$ and $n_3$ takes the form of Cayley's $2\times2\times2$ hyperdeterminant as follows: 
\begin{align}\label{dCKP}
 &\Big[(p-q)^2(q-r)^2(r-p)^2\tau\thb\tau+(p+q)^2(p+r)^2(q-r)^2\wt\tau\hb\tau \nonumber \\
 &\quad-(q+r)^2(q+p)^2(r-p)^2\wh\tau\bt\tau-(r+p)^2(r+q)^2(p-q)^2\wb\tau\th\tau\Big]^2 \nonumber \\
 &\qquad-4(p^2-q^2)^2(p^2-r^2)^2\Big[(q+r)^2\th\tau\bt\tau-(q-r)^2\wt\tau\thb\tau\Big]\Big[(q+r)^2\wh\tau\wb\tau-(q-r)^2\tau\hb\tau\Big]=0.
\end{align}
This is the discrete CKP equation first introduced by Kashaev \cite{Kas96}, and was entitled CKP attributed to Schief \cite{Sch03} who revealed that 
this equation serves the superposition property of the continuous CKP equation. Compared with the discrete AKP and BKP equations, this model is no longer 
in Hirota's bilinear form and instead it takes multiquadratic and quadrilinear form. 

\subsection{Reduction on the measure}

The 3D lattice models, namely the discrete AKP, BKP and CKP equations, are all associated with a nonlocal Riemann--Hilbert problem where the measure depends 
on two spectral parameters and a double integral is involved. The key point in the reduction problem is to reduce the double integral in \eqref{IntEq} to 
a single one connected to a local Riemann--Hilbert problem. The reduction can be performed on the measure $\rd\zeta(k,k')$, see e.g. \cite{ZZN12,FN17}. 
In the cases of the discrete and continuous AKP, BKP and CKP hierarchies, we consider the following reductions, respectively: 
\bse\label{Reduc}
\begin{align}
 &\rd\zeta(k,k')=\sum_{j=1}^{\cN-1}\rd\ld_j(k)\rd k'\delta(k'+\oa_j(k)), \label{Reduc:AKP} \\
 &\rd\zeta(k,k')=\sum_{j=1}^{\cN-1}\Big[\rd\ld_j(k)\rd k'\delta(k'+\oa_j(k))-\rd\ld_j(k')\rd k\delta(k+\oa_j(k'))\Big], \label{Reduc:BKP}\\
 &\rd\zeta(k,k')=\sum_{j=1}^{\cN-1}\Big[\rd\ld_j(k)\rd k'\delta(k'+\oa_j(k))+\rd\ld_j(k')\rd k\delta(k+\oa_j(k'))\Big]. \label{Reduc:CKP}
\end{align}
\ese
The reason why the reductions for BKP and CKP are more complex is because the measures in these cases obey antisymmetry and symmetry properties, respectively, 
cf. \eqref{BKP:Info} and \eqref{CKP:Info}, and such properties must be preserved for the compatibility between the 3D equations and the reduced 2D equations. 

Such a procedure, i.e. reduction on the measure, is effectively restricting spectral parameters acting as free parameters in solutions to a hierarchy of 
3D integrable equations on an algebraic curve, as we can observe this from \eqref{Sol}, or more concretely, from \eqref{AKP:Sol}, \eqref{BKP:Sol} and 
\eqref{CKP:Sol}; in other words, imposing a constraint on the measure is deep down restricting the solution space of a 3D hierarchy, and thus the problem 
turns to be looking for the corresponding reduced hierarchy of 2D equations describing the dynamics of the reduced solution space. This can be realised via 
translating the algebraic curve of the spectral parameters to its corresponding symmetry constraint on the $\tau$-function, and then the reduced 2D hierarchy 
can be obtained by applying the symmetry constraint to the 3D hierarchy, see Section \ref{S:Reduc}. 

\subsection{Multi-dimensional consistency}\label{S:MDC}

The MDC property, namely the property that a discrete equation can be consistently extended to an infinite-parameter family of partial difference equations 
with an infinite number of independent variables, is a consequence of the DL framework. We recall that in the DL for 3D discrete integrable systems, the key 
ingredients are the Cauchy kernel $\Oa_{k,l'}$, and its compatible plane wave factors $\rho_k$ and $\sigma_{k'}$ including an infinite number of discrete 
independent variables; then a hierarchy of 3D discrete equations can be constructed and all the equations are compatible in the sense of possessing a common 
solution admitting an infinite degrees of freedom in terms of spectral parameters (e.g. soliton solution), which in turn implies that the MDC property is 
guaranteed. 

Let us take the Hirota--Miwa equation as an example. One can observe in \eqref{AKP:Info} that the plane wave factors $\rho_k$ and $\sigma_{k'}$ contain 
an infinite number of discrete variables $\{n_\gamma\}$. Thus the soliton solution given by \eqref{AKP:Sol} (i.e. a common solution) solves a hierarchy 
of the Hirota--Miwa equations taking the form of 
\begin{align*}
 &(p_a-p_b)\tau(n_c+1)\tau(n_a+1,n_b+1) \\
 &\quad+(p_b-p_c)\tau(n_a+1)\tau(n_b+1,n_c+1)+(p_c-p_a)\tau(n_b+1)\tau(n_a+1,n_c+1)=0,
\end{align*}
where $p_a$, $p_b$ and $p_c$ are any three arbitrary distinct lattice parameters selected from $\{p_\gamma|\gamma\in\mathbb{Z}^+\}$, and $n_a$, $n_b$ and 
$n_c$ are their corresponding lattice independent variables. Therefore, the lattice independent variables $\{n_\gamma|\gamma\in\mathbb{Z}^+\}$ should be 
understood as the discrete flow variables like the flows variables $\{x_\gamma|\gamma\in\mathbb{Z}^+\}$ in the continuous case. The only difference is that 
in the continuous case $\{x_\gamma|\gamma\in\mathbb{Z}^+\}$ lead to the higher-order symmetries (i.e. the continuous AKP hierarchy) because they play 
different roles (for instance, $x_1$ and $x_2$ are the two preselected variables) in solution structure, while in the discrete case all the lattice variables 
are on the same footing, cf. \eqref{AKP:Info}, and thus the ``higher-order symmetries'' take the same form, and are solved by the common soliton solution -- 
this is the MDC property of the Hirota--Miwa equation. We note that the analysis is also applicable to the more general solution involving the double integral, 
cf. the $\tau$-function defined by the infinite matrices introduced in \eqref{Sol} (see \cite{FN16} for more details), although the MDC property of the 
Hirota--Miwa equation is illustrated by soliton solution here. 

We argue that the MDC property is preserved for the 2D lattice equations arising from a multi-dimensionally consistent 3D lattice equation by our reduction 
approach. This reduction is performed by restricting the spectral parameters in the solutions, which brings a more particular class of solutions to the 3D 
equation. From the view point of the DL method, the hierarchies of 2D equations are obtained by reducing the solution space, and thus the resulting equations 
still have the corresponding reduced common solution, i.e. the 2D equations are multi-dimensionally consistent. 

In addition, sometimes the discrete factors in the plane wave may take different forms, which we call \emph{non-covariant}, leading to the lattice equations 
in the hierarchy also having different forms. In this case, a common solution can still be constructed from the DL framework, namely the non-covariant 
equations in the hierarchy still possess the MDC property. However, this property is not obvious if a lattice equation is studied separately; an example is 
the truncated discrete BSQ equation \eqref{dBSQ:DJM}, which will be discussed later in Section \ref{S:BSQ}. In contrast, we call a discrete plane wave 
\emph{covariant} if all the factors in it take the same form, and this leads to the fact that all the equations in the discrete hierarchy are the same with 
regard to lattice directions and parameters, e.g. the Hirota--Miwa hierarchy discussed above.

\section{Discrete Boussinesq and Korteweg--de Vries equations}\label{S:Comparison}

The reduction problem of a 3D lattice integrable equation amounts to imposing a constraint on the spectral parameters $k$ and $k'$. In this section, we 
illustrate how a generic algebraic curve for the spectral parameters affects the reduction result by concrete examples including the discrete BSQ and KdV 
equations. 

There are two existing discrete BSQ equations, namely a bilinear one and a trilinear one. By performing the reduction associated with the generic algebraic 
curve, the connection between the two equations is clearly revealed. Following this idea, the reduction to the discrete KdV equation is also explained in 
the same way. 

\subsection{Discrete Boussinesq equation: bilinear versus trilinear}\label{S:BSQ}

The discrete BSQ equation arises as the so-called 3-reduction of the discrete AKP equation. In order to understand the structure of the discrete BSQ equation, 
we start with considering the effective plane wave factor in the $\tau$-function of the discrete AKP equation, i.e. 
\begin{align}\label{AKP:PWF}
 \rho_k\sigma_{k'}=\prod_{\gamma=1}^{\infty}\left(\frac{p_\gamma+k}{p_\gamma-k'}\right)^{n_\gamma}
 \exp\left(\sum_{\gamma=1}^{\infty}(k^\gamma-(-k')^\gamma)x_\gamma\right), 
\end{align}
and also the Cauchy kernel $\Oa_{k,k'}=1/(k+k')$. Two different reductions can be imposed on the discrete AKP equation, i.e. the Hirota--Miwa equation, 
and they result in the bilinear and trilinear discrete BSQ equations, respectively. 
\subsubsection{Bilinear discrete Boussinesq equation}
One can impose an algebraic relation on the spectral parameters $k$ and $k'$ as follows: 
\begin{align}\label{dBSQ:DJMReduc}
 R(p_1,p_2,p_3)=\left(\frac{p_1+k}{p_1-k'}\right)\left(\frac{p_2+k}{p_2-k'}\right)\left(\frac{p_3+k}{p_3-k'}\right)=1.
\end{align}
Such a reduction leads to a symmetry constraint on the $\tau$-function defined in \eqref{AKP:Sol}, namely $\thb\tau=\tau$, as one can observe from 
\eqref{AKP:PWF}. The following bilinear 2D equation arises from the Hirota--Miwa equation \eqref{dAKP} as a consequence of imposing the symmetry constraint: 
\begin{align}\label{dBSQ:DJM}
 (p-q)\th\tau\uth\tau+(q-r)\wt\tau\ut\tau+(r-p)\wh\tau\uh\tau=0,
\end{align}
which is the bilinear discrete BSQ equation given by Date, Jimbo and Miwa, see \cite{DJM83}. 

We would like to argue that such a reduction is not sufficient to describe the BSQ structure. This is because the continuous BSQ equation 
$(D_1^4+3D_2^2)\tau\cdot\tau=0$ arises from the continuous AKP equation $(D_1^4-4D_1D_3+3D_2^2)\tau\cdot\tau=0$ by imposing the symmetry constraint 
$\partial_{x_3}\tau=0$, where $x_3$ is the third flow variable in the continuous AKP hierarchy (see \eqref{AKP:PWF}) and $D$ is Hirota's bilinear operator. 
Such a reduction procedure is deep down governed by the algebraic constraint $k^3=(-k')^3$, cf. \eqref{AKP:Info} and \eqref{AKP:Sol}, which implies that 
\eqref{dBSQ:DJMReduc} does not completely reflect the algebraic structure of BSQ. 

For the sake of the consistency between the discrete and continuous BSQ equations, some extra conditions are necessary, 
namely $p_1$, $p_2$ and $p_3$ must be connected in order to reduce \eqref{dBSQ:DJMReduc} to $k^3=(-k')^3$. By expanding \eqref{dBSQ:DJMReduc}, one can 
find that the lattice parameters must be restricted by the relations $p_2=\oa p_1$ and $p_3=\oa^2 p_1$, where $\oa$ is a primitive cube root of unity 
given by $\oa=\exp(2\pi\mathfrak{i}/3)$, cf. \cite{Hie15} (this was also implied in the continuum limit of the equation in \cite{DJM83}). Under this 
assumption the equation \eqref{dBSQ:DJM} turns out to be 
\begin{align}\label{dBSQ:BL}
 \th\tau\uth\tau+\oa\wt\tau\ut\tau+\oa^2\wh\tau\uh\tau=0,
\end{align}
which is a discretisation of the bilinear BSQ equation. This reduction also brings us the reduced effective plane wave factors for the equation given by 
\begin{align}\label{dBSQ:BLPWF}
 \rho_k\sigma_{-\oa^j k}=\prod_{\gamma=2}^{3}\left(\frac{\oa^{\gamma-1}p_1+k}{\oa^{\gamma-1}p_1+\oa^j k}\right)^{n_\gamma}
 \prod_{\gamma\neq2,3,\gamma\in\mathbb{Z}^+}\left(\frac{p_\gamma+k}{p_\gamma+\oa^j k}\right)^{n_\gamma}
 \exp\left(\sum_{\gamma=1}^{\infty}(k^\gamma-(\oa^j k)^\gamma)x_\gamma\right)
\end{align}
for $j=1,2$. This amounts to a constraint on the lattice parameters and hence truncates the dynamics. In other words, \eqref{dBSQ:BL} is a truncated 
equation in the discrete hierarchy. Furthermore, due to the dependence of $p_2$ on $p_1$, the semi-discrete BSQ equation probably cannot be recovered 
from the continuum limit of \eqref{dBSQ:BL} -- the limit immediately leads to the fully continuous one since there is effectively only one lattice 
parameter $p_1$ in the equation. 

Equations \eqref{dBSQ:DJM} and \eqref{dBSQ:BL} are the ones describing the dynamical evolutions with respect to $n_1$ and $n_2$ and both take bilinear forms. 
Similarly, the equations for lattice directions $(n_2,n_3)$ or $(n_1,n_3)$ also take the same forms because the restrictions on $p_1$, $p_2$ and $p_3$ imply 
that the lattice variables $n_1$, $n_2$ and $n_3$ are treated particularly, cf. \eqref{dBSQ:DJMReduc}. A natural question is that what an equation in the 2D 
discrete hierarchy would be if at least one of the discrete variables is selected from $\{n_\gamma|\gamma\geq4\}$ (for instance one may consider the equation 
describing the dynamical evolutions with respect to $n_4$ and $n_5$). The answer is that such equations are in trilinear form and the general structure is 
explained below. 

\subsubsection{Trilinear discrete Boussinesq equation}
We consider the reduction based on a generic cubic curve 
\begin{align}\label{CubicCurve}
 C(-k',k)=(-k')^3-k^3+\alpha_2((-k')^2-k^2)+\alpha_1((-k')-k)=0,
\end{align}
following from \eqref{AlgCurve}, and denote the three roots by $k'=-\oa_j(k)$ for $j=1,2,3$, in which $\oa_1(k)$ and $\oa_2(k)$ are the two primitive roots 
depending on the parameters $\alpha_1$ and $\alpha_2$, and $\oa_3(k)=k$. In the case of $\alpha_1=\alpha_2=0$, the generic cubic curve degenerates to the 
reduction to the BSQ equation $k^3=(-k')^3$ and its solutions are given by $k'=-\oa^jk$. In other words, we have $\oa_j(k)|_{\alpha_1=\alpha_2=0}=\oa^jk$. 
As we have pointed out at the end of Section \ref{S:KP}, imposing the constraint on the spectral parameters is equivalent to considering a subspace of the 
general soliton solution space (i.e. \eqref{AKP:Sol}) of the AKP hierarchy, and thus we expect to find a 2D discrete hierarchy describing the subspace.

The reduction based on the cubic curve provides us with the effective plane wave factors as follows: 
\begin{align}\label{edBSQ:PWF}
 \rho_k\sigma_{-\oa_j(k)}=\prod_{\gamma=1}^{\infty}\left(\frac{p_\gamma+k}{p_\gamma+\oa_j(k)}\right)^{n_\gamma}
 \exp\left(\sum_{\gamma=1}^{\infty}(k^\gamma-(\oa_j(k))^\gamma)x_\gamma\right), \quad j=1,2.
\end{align}
The direct linearisation approach shows in \cite{ZZN12} that such a reduction leads to a hierarchy of extended discrete BSQ equations in trilinear form 
and the one associated with $n_1$ and $n_2$ takes the form of 
\begin{align}\label{edBSQ}
 &(p-q)^2\th\tau\tau\uth\tau-(3q^2-2\alpha_2 q+\alpha_1)\wt\tau\tau\ut\tau-(3p^2-2\alpha_2 p+\alpha_1)\wh\tau\tau\uh\tau \nonumber \\
 &\quad+[(p^2+pq+q^2)-\alpha_2(p+q)+\alpha_1](\ut{\wh\tau}\wt\tau\uh\tau+\uh{\wt\tau}\wh\tau\ut\tau)=0.
\end{align}
For the unextended BSQ equation, we can consider the degeneration $\alpha_1=\alpha_2=0$ and it gives the effective plane wave factors 
\begin{align}\label{dBSQ:PWF}
 \rho_k\sigma_{-\oa^j k}=\prod_{\gamma=1}^{\infty}\left(\frac{p_\gamma+k}{p_\gamma+\oa^j k}\right)^{n_\gamma}
 \exp\left(\sum_{\gamma=1}^{\infty}(k^\gamma-(\oa^j k)^\gamma)x_\gamma\right), \quad j=1,2,
\end{align}
for the unextended fully discrete BSQ hierarchy. The equation in the hierarchy for directions $(n_1,n_2)$ takes the trilinear form of 
\begin{align}\label{dBSQ}
 (p-q)^2\th\tau\tau\uth\tau-3q^2\wt\tau\tau\ut\tau-3p^2\wh\tau\tau\uh\tau
 +(p^2+pq+q^2)(\ut{\wh\tau}\wt\tau\uh\tau+\uh{\wt\tau}\wh\tau\ut\tau)=0.
\end{align}
We note that the nonlinear forms of this trilinear equation \eqref{dBSQ} were even given earlier from the view point of the DL in \cite{NPCQ92,Nij99}. 
From the structure of the plane wave factors we observe that no restriction is imposed on the lattice parameters $\{p_\gamma\}$ and therefore compared to 
the bilinear BSQ equations, the trilinear ones possess more general structure. In fact, the bilinear equations can be recovered from the trilinear equations. 
One can take $\alpha_1=pq+qr+rp$ and $\alpha_2=p+q+r$, then the extended BSQ equation \eqref{edBSQ} turns out to be Equation \eqref{dBSQ:DJM} immediately. 
This is because the cubic curve \eqref{CubicCurve} under such a degeneration becomes \eqref{dBSQ:DJMReduc}. And the equation \eqref{dBSQ:BL} can be recovered 
from the unextended BSQ equation \eqref{dBSQ} by taking $q=\oa p$ -- this can be obtained if one compares their plane wave factors \eqref{dBSQ:PWF} and 
\eqref{dBSQ:BLPWF}. 

The statement can be made as follows: The extended trilinear BSQ equation \eqref{edBSQ} possesses the most general solution structure in the BSQ class. 
Equation \eqref{dBSQ} as an equation without restrictions on the lattice parameters should be referred to as the discrete BSQ equation since it is 
consistent with the continuous BSQ hierarchy in the sense of the curve relation $k^3=(-k')^3$. While Equations \eqref{dBSQ:DJM} and \eqref{dBSQ:BL} taking 
bilinear forms are degenerations of the trilinear equations when extra conditions are imposed on the curve parameters $\alpha_1$ and $\alpha_2$ as well as 
the lattice parameters $p_2$ and $p_3$. Therefore one may think of them as truncations of the extended discrete BSQ equation \eqref{edBSQ} and the discrete 
BSQ equation \eqref{dBSQ} respectively. 

Furthermore, we would like to make a few comments on the MDC property of the above discrete BSQ equations. As is mentioned in Subsection \ref{S:MDC}, the MDC 
property is preserved in dimensional reductions of multi-dimensionally consistent 3D equations. Thus the various BSQ equations, namely \eqref{dBSQ:DJM}, 
\eqref{dBSQ:BL}, \eqref{edBSQ} and \eqref{dBSQ}, all possess the MDC property since the common solutions for the discrete hierarchies follow from their 
respective reductions on that of the discrete AKP hierarchy. In the cases of the non-truncated discrete BSQ equations \eqref{edBSQ} and \eqref{dBSQ}, all 
the discrete BSQ equations in their respective hierarchies take the same form, which is guaranteed by their respective reduced plane wave factors 
\eqref{edBSQ:PWF} and \eqref{dBSQ:PWF}. While in the truncated cases \eqref{dBSQ:DJM} and \eqref{dBSQ:BL}, the equations in each discrete hierarchy may 
look differently. Taking \eqref{dBSQ:DJM} as an example, the equation involving the lattice variables $n_1$ and $n_2$ is bilinear, i.e. \eqref{dBSQ:DJM} itself; 
however, the equation describing the dynamical evolutions with respect to $n_4$ and $n_5$ is trilinear -- we can start from the extended discrete BSQ 
equation for $(n_4,n_5)$ which takes the same form of \eqref{edBSQ} but involves lattice parameters $p_4,p_5$ and extra parameters $\alpha_1,\alpha_2$; 
then the same degeneration from \eqref{edBSQ} to \eqref{dBSQ:DJM}, namely $\alpha_1=pq+qr+rp$ and $\alpha_2=p+q+r$, gives rise to a trilinear equation 
containing parameters $p_4,p_5$ and $p,q,r$.

\subsection{Discrete Korteweg--de Vries equation}\label{S:KdV}
A similar comparison can also be made for the discrete KdV class (although it is not as obvious as the case in the discrete BSQ class). One can introduce 
the 2-reduction on the spectral parameters $k$ and $k'$ as follows: 
\begin{align}\label{dKdV:DJMReduc}
 R(p_2,p_3)=\left(\frac{p_2+k}{p_2-k'}\right)\left(\frac{p_3+k}{p_3-k'}\right)=1.
\end{align}
This is equivalent to the constraint on the $\tau$-function $\hb\tau=\tau$. Making use of the constraint a reduced discrete equation can immediately be 
derived from the Hirota--Miwa equation \eqref{dAKP}, i.e. 
\begin{align}\label{dKdV:DJM}
 (p-q)\th\tau\uh\tau+(q-r)\tau\wt\tau+(r-p)\wt{\uh\tau}\wh\tau=0,
\end{align}
which is the bilinear discrete KdV equation given in \cite{DJM83}. 

Similarly to the case in the discrete BSQ equation, in order to match the discrete KdV equation to its continuous counterpart, one has to take $p_3=-p_2$ 
and therefore the reduction condition becomes the relation $k^2=(-k')^2$, which reduces the continuous KP hierarchy to the continuous KdV hierarchy. 
This gives us the effective plane wave factor that takes the form of 
\begin{align}\label{dKdV:BLPWF}
 \rho_k\sigma_{k}=\left(\frac{-p_2+k}{-p_2-k}\right)^{n_3}
 \prod_{\gamma\neq3,\gamma\in\mathbb{Z}^+}\left(\frac{p_\gamma+k}{p_\gamma-k}\right)^{n_\gamma}
 \exp\left(\sum_{\gamma=1}^{\infty}(k^\gamma-(-k)^\gamma)x_\gamma\right).
\end{align}
Simultaneously this restriction of the lattice parameter $p_3$ on Equation \eqref{dKdV:BL} gives rise to the 6-point bilinear discrete KdV equation 
\begin{align}\label{dKdV:BL}
 (p+q)\uh{\wt\tau}\wh\tau-(p-q)\uh\tau\th\tau=2q\tau\wt\tau.
\end{align}
In the above reduction one can still see that some lattice parameters are restricted and we argue that it leads to some truncated solution structure. 

We now consider the reduction associated with a generic quadratic curve of spectral parameters which is independent of any lattice 
parameters as follows: 
\begin{align}\label{QuadCurve}
 C(-k',k)=(-k')^2-k^2+\alpha_1((-k')-k)=0.
\end{align}
The curve has two roots $k'=-\oa_1(k)=k+\alpha_1$ and $k'=-\oa_2(k)=-k$. Applying such a reduction to \eqref{AKP:PWF} provides us with the effective plane 
wave factor: 
\begin{align}\label{edKdV:PWF}
 \rho_k\sigma_{-\oa_1(k)}=\prod_{\gamma=1}^{\infty}\left(\frac{p_\gamma+k}{p_\gamma+\oa_1(k)}\right)^{n_\gamma}
 \exp\left(\sum_{\gamma=1}^{\infty}(k^\gamma-(\oa_1(k))^\gamma)x_\gamma\right),
\end{align}
where we only choose the primitive root $k'=-\oa_1(k)=k+\alpha_1$. Then the $\tau$-function defined in \eqref{AKP:Sol} under this reduction solves the 
following two compatible equations: 
\begin{align}\label{edKdV}
 (p+q-\alpha_1)\ut{\wh\tau}\wt\tau+(p-q)\ut\tau\th\tau=(2p-\alpha_1)\tau\wh\tau, \quad
 (p+q-\alpha_1)\uh{\wt\tau}\wh\tau-(p-q)\uh\tau\th\tau=(2q-\alpha_1)\tau\wt\tau,
\end{align}
in which either can be thought of as the discrete analogue of the bilinear KdV equation. The derivation of the bilinear discrete KdV equations will be given 
in the next section, and here we note that the equations are actually embedded in the extended discrete BSQ equation \eqref{edBSQ} because the generic cubic 
algebraic curve \eqref{CubicCurve} contains the curve \eqref{QuadCurve}. Equation \eqref{edKdV} is the analogue of the trilinear equation \eqref{edBSQ} on 
the KdV level. 

By taking $\alpha_1=0$ in \eqref{edKdV}, we obtain two compatible equations as follows: 
\begin{align}\label{dKdV}
 (p+q)\ut{\wh\tau}\wt\tau+(p-q)\ut\tau\th\tau=2p\tau\wh\tau, \quad
 (p+q)\uh{\wt\tau}\wh\tau-(p-q)\uh\tau\th\tau=2q\tau\wt\tau.
\end{align}
Considering the compatibility of the two equations, we obtain a third lattice equation taking the form of 
\begin{align}\label{dtToda}
 (p-q)^2\th\tau\uth\tau-(p+q)^2\wt{\uh\tau}\wh{\ut\tau}+4pq\tau^2=0,
\end{align}
which is known as the discrete-time Toda equation. Since the curve \eqref{QuadCurve} degenerates when $\alpha_1=0$, namely $k'=-\oa_j(k)=(-1)^{j-1}k$ for 
$j=1,2$, the equations in \eqref{dKdV} and Equation \eqref{dtToda} are all governed by the effective plane wave factor
\begin{align}\label{dKdV:PWF}
 \rho_k\sigma_{k}=\prod_{\gamma=1}^{\infty}\left(\frac{p_\gamma+k}{p_\gamma-k}\right)^{n_\gamma}
 \exp\left(\sum_{\gamma=1}^{\infty}(k^\gamma-(-k)^\gamma)x_\gamma\right),
\end{align}
where we only choose the primitive root $k'=k$ in order to avoid triviality. This in turn implies that any single equation in \eqref{dKdV} and \eqref{dtToda} 
can be considered as the bilinear discrete KdV equation -- they are just different bilinear forms and the $\tau$-functions are the same. 

The second discrete KdV equation in \eqref{dKdV} looks exactly the same as \eqref{dKdV:BL}. But in fact, the reduced equation from a generic algebraic 
curve possesses more general structure as can be seen if one compares \eqref{dKdV:PWF} and \eqref{dKdV:BLPWF}. More concretely, in the previous case, 
a restriction is made on the lattice parameters, i.e. $p_3=-p_2$, this leads the equation involving the discrete variables $n_2$ and $n_3$ to its 
degeneration $\hb\tau=\tau$, namely the constraint itself. And this is also the reason why the first equation in \eqref{dKdV} is missing in the former case. 
While in the latter case, all the discrete variables are on the same footing and all the discrete KdV equations in the hierarchy are covariant. 

Furthermore, similar to the discrete BSQ case, if we take $\alpha_1=q+r$, the quadratic curve \eqref{QuadCurve} turns out to be the reduction condition 
\eqref{dKdV:DJMReduc}, in other words, the equation \eqref{dKdV:DJM} is a truncated version of the extended discrete KdV equation \eqref{edKdV}. This also 
reflects on the equations themselves, namely we can from \eqref{edKdV} recover the following bilinear discrete 
equations: 
\begin{align}\label{tedKdV}
 (p-r)\ut{\wh\tau}\wt\tau+(p-q)\ut\tau\th\tau=(2p-q-r)\tau\wh\tau, \quad
 (p-r)\uh{\wt\tau}\wh\tau-(p-q)\uh\tau\th\tau=(q-r)\tau\wt\tau,
\end{align}
in which the second equation is exactly the same as \eqref{dKdV:DJM}, while the first one was missing in \cite{DJM83}. 

The MDC property of the discrete KdV equations, similarly to the discrete BSQ equations, is still preserved according to the general theory in Subsection 
\ref{S:MDC}. One remark here is that the equations in \eqref{dKdV} and \eqref{dtToda} should be understood as members in the same discrete KdV hierarchy. 

\bigskip

In the above analysis of the discrete BSQ and KdV equations, we considered the reduction associated with $\alpha_\cN=1$. The results show that the reduced 
2D lattice equations have more general structure, and the bilinear discrete BSQ and KdV equations given in the literature are the truncations of the obtained 
2D integrable lattice equations by taking particular values for the parameters $\alpha_i$. 

However, one can observe that $\alpha_1$ and $\alpha_2$ can be removed from Equations \eqref{edKdV} and \eqref{edBSQ}, respectively, after some simple 
transforms on the lattice parameters $p$ and $q$. For instance, the transforms $p\rightarrow p+\alpha_1/2$ and $q\rightarrow q+\alpha_1/2$ 
immediately bring \eqref{edKdV} to \eqref{dKdV}. Similar transforms can also eliminate $\alpha_2$ in \eqref{edBSQ}. This is because the term of the second 
highest degree, i.e. $\alpha_{\cN-1}(t^{\cN-1}-k^{\cN-1})$, in the algebraic curve \eqref{AlgCurve} can be absorbed by certain transforms on $t$ and $k$; 
in other words, $\alpha_{\cN-1}$ is redundant in the reductions. Thus, without losing generality, for dimensional reductions of the discrete KP equations, 
we only need to consider a generic algebraic curve of degree $\cN$ in the form of 
\begin{align}\label{AlgCurv:A}
 C_\cN(t,k)\doteq P_\cN(t)-P_\cN(k)=0, \quad \hbox{where} \quad P_\cN(t)=t^\cN+\sum_{j=1}^{\cN-2}\alpha_{j}t^j, \quad \cN=2,3,\cdots.
\end{align}
For convenience, we still denote the $\cN$ roots to the algebraic equation by $t=\oa_j(k)$ for $j=1,2,\cdots,\cN$ in which $\oa_j$ satisfies 
$\oa_j(k)|_{\alpha_1=\cdots=\alpha_{\cN-2}=0}=\oa^jk$ with $\oa=\exp(2\pi\mathfrak{i}/\cN)$. 

Considering the algebraic curve $C_\cN(-k',k)=0$, we find that in the case of $\cN=2$, the curve takes the form of $C_2(-k,k)=(-k')^2-k^2=0$. Thus there 
is no extended form for the discrete KdV equation, and every single equation in \eqref{dKdV} and \eqref{dtToda} can be thought of as its standard form. 
When $\cN=3$, the algebraic curve $C_3(-k',k)=(-k')^3-k^3+\alpha_1((-k')-k)=0$ contains only one parameter $\alpha_1$. Hence the extended BSQ equation 
can be obtained immediately by taking $\alpha_2=0$ in \eqref{edBSQ}, which takes the form of 
\begin{align}\label{edBSQ2}
 (p-q)^2\th\tau\tau\uth\tau-(3q^2+\alpha_1)\wt\tau\tau\ut\tau-(3p^2+\alpha_1)\wh\tau\tau\uh\tau
 +(p^2+pq+q^2+\alpha_1)(\ut{\wh\tau}\wt\tau\uh\tau+\uh{\wt\tau}\wh\tau\ut\tau)=0.
\end{align}
The degeneration case $\alpha_1=0$ then gives us the standard trilinear BSQ equation \eqref{dBSQ}.

\section{General reduction scheme}\label{S:Reduc}
In this section, we would like to consider the reduction associated with a generic algebraic curve involving extended parameters $\alpha_j$ for the 
discrete AKP, BKP and CKP equations. The aim is to give a unified expression of the reduced 2D lattice equations in each case. Examples include some 
new integrable lattice equations such as the discrete SK, KK and HS equations. 

As we discussed in the previous section, $\alpha_{\cN-1}$ is redundant in the reduction and it can be removed by some suitable transforms on the 
spectral parameters. Therefore, for reductions of the discrete AKP equation, we only consider constraint $C_\cN(-k,k)=0$ defined by \eqref{AlgCurv:A}, 
and it is corresponding to the reduction condition \eqref{Reduc:AKP}. However, since the measure reductions for BKP and CKP, i.e. \eqref{Reduc:BKP} and 
\eqref{Reduc:CKP}, must be anti-symmetrised and symmetrised, respectively, due to their original properties on the 3D level, cf. \eqref{BKP:Sol} and 
\eqref{CKP:Info}, in both cases the generic curve must take the following form: 
\begin{align}\label{AlgCurv:BC}
 C_\cN(t,k)\doteq P_\cN(t)-P_\cN(k)=0 \quad \hbox{with} \quad
 P_\cN(t)=
 \left\{
 \begin{array}{ll}
  t^\cN+\sum_{i=1}^{j-1}\alpha_{2i}t^{2i}, & \cN=2j, \\
  t^\cN+\sum_{i=0}^{j-1}\alpha_{2i+1}t^{2i+1}, & \cN=2j+1 
 \end{array}
 \right.
\end{align}
for $j=1,2,\cdots$, and we still denote all the roots of the curve by $t=\oa_j(k)$. This curve immediately leads to the fact that 
\begin{align}\label{AlgCurv:OddEven}
 \quad C_\cN(-k',k)=0 \quad \Longleftrightarrow \quad C_\cN(-k,k')=0,
\end{align}
because $P(t)$ in the curve \eqref{AlgCurv:BC} is either an odd degree polynomial or an even degree polynomial. The reduction $C(-k,k')=0$ for the discrete 
AKP, BKP and CKP equations can be translated to a symmetry constraint and we give it in the following proposition: 

\begin{proposition}
The reduction condition $C_\cN(-k',k)=0$ determined by \eqref{AlgCurv:A} (resp. \eqref{AlgCurv:BC}) for the discrete AKP (resp. BKP and CKP) equation(s) 
provides the symmetry constraint 
\begin{align}\label{Reduc:tau}
 \left(\prod_{j=1}^\cN\rT_{-\oa_j(-p_\gamma)}\right)\tau\Big|_{C_\cN(-k',k)=0}=\tau \quad \hbox{for} \quad \gamma=1,2,\cdots.
\end{align}
\end{proposition}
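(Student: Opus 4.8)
The plan is to reduce the statement to a single multiplicative identity for the \emph{effective plane wave factor} $\rho_k\sigma_{k'}$ and then lift it to $\tau$. Recall from \eqref{Sol} that $\tau=\det(1+\bOa\cdot\bC)$ with $\bC=\iint_D\rd\zeta(k,k')\rho_k\bc_k\tbc_{k'}\sigma_{k'}$, that the Cauchy kernel $\bOa$ and the vectors $\bc_k$, $\tbc_{k'}$ carry no dependence on the lattice variables, and that after the measure reductions \eqref{Reduc:AKP}--\eqref{Reduc:CKP} the delta factors $\delta(k'+\oa_j(k))$ localise the integration onto the primitive branches of the curve $C_\cN(-k',k)=0$. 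Hence it suffices to show that the composite operator $\prod_{j=1}^\cN\rT_{-\oa_j(-p_\gamma)}$ fixes $\rho_k\sigma_{k'}$ for every $(k,k')$ on that curve; invariance of $\bC$, and therefore of $\det(1+\bOa\cdot\bC)$ (which is $\tau$, or $\tau^2$ in the BKP case), then follows by linearity of $\bC$ in $\rho_k\sigma_{k'}$.

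First I would record the action of a formal shift $\rT_a$ on the plane wave factors: for the AKP factors \eqref{AKP:Info} one has $\rT_a\rho_k=(a+k)\rho_k$ and $\rT_a\sigma_{k'}=(a-k')^{-1}\sigma_{k'}$, whereas for the BKP/CKP factors \eqref{BKP:Info}, \eqref{CKP:Info}, where $\sigma_{k'}=\rho_{k'}$, one has $\rT_a\rho_k=\frac{a+k}{a-k}\rho_k$. Next I would evaluate the composite shift factor using the defining factorisation $P_\cN(t)-P_\cN(-p_\gamma)=\prod_{j=1}^\cN\bigl(t-\oa_j(-p_\gamma)\bigr)$ of the curve polynomial; substituting $t=k,\,-k,\,k',\,-k'$ collapses the relevant products of linear factors into differences of $P_\cN$-values. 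Note that the factor $j=\cN$, for which $\oa_\cN(-p_\gamma)=-p_\gamma$, simply reproduces the ordinary shift $\rT_{p_\gamma}$, so the composite operator genuinely extends the natural lattice shift.

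Carrying this out, the AKP case yields $\prod_{j}\rT_{-\oa_j(-p_\gamma)}(\rho_k\sigma_{k'})=\frac{P_\cN(k)-P_\cN(-p_\gamma)}{P_\cN(-k')-P_\cN(-p_\gamma)}\,\rho_k\sigma_{k'}$, which equals $\rho_k\sigma_{k'}$ the moment the single curve condition $P_\cN(-k')=P_\cN(k)$, i.e. $C_\cN(-k',k)=0$, is imposed. For BKP and CKP the same manipulation produces the ratio $\frac{[P_\cN(k)-P_\cN(-p_\gamma)][P_\cN(k')-P_\cN(-p_\gamma)]}{[P_\cN(-k)-P_\cN(-p_\gamma)][P_\cN(-k')-P_\cN(-p_\gamma)]}$, and here I would invoke the even/odd structure built into \eqref{AlgCurv:BC}: the equivalence \eqref{AlgCurv:OddEven} guarantees that $C_\cN(-k',k)=0$ forces \emph{both} $P_\cN(-k')=P_\cN(k)$ \emph{and} $P_\cN(-k)=P_\cN(k')$, whereupon numerator and denominator coincide and the ratio is again $1$. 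Since the Cauchy kernel factor carries no lattice dependence and the reduced measures \eqref{Reduc:BKP}, \eqref{Reduc:CKP} respect the required (anti)symmetry, the invariance passes to $\bC$ and hence to $\tau$.

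The main obstacle is conceptual rather than computational: giving a clean meaning to the formal shift $\rT_{-\oa_j(-p_\gamma)}$ in an auxiliary lattice direction whose parameter depends on $p_\gamma$ and the $\alpha_i$, and justifying that such shifts are legitimately present in the hierarchy — which is exactly what the MDC property of Subsection \ref{S:MDC} supplies — so that the plane-wave identity genuinely lifts to the $\tau$-function. The second delicate point is the BKP/CKP case: it is precisely the need to cancel the \emph{two} $P_\cN$-factors arising from $\rho_k$ and $\rho_{k'}$ that forces the polynomial to be purely even or purely odd as in \eqref{AlgCurv:BC}; recognising this is what makes \eqref{AlgCurv:OddEven} the crucial ingredient, whereas the AKP reduction requires no such restriction.
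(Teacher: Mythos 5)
Your proposal is correct and follows essentially the same route as the paper's own proof: reduce the claim to the invariance of the effective plane wave factor $\rho_k\sigma_{k'}$ inside $\bC$, collapse the composite shift factor via the factorisation $P_\cN(t)-P_\cN(-p_\gamma)=\prod_{j=1}^\cN\bigl(t-\oa_j(-p_\gamma)\bigr)$, and invoke $C_\cN(-k',k)=0$ for AKP together with the even/odd equivalence \eqref{AlgCurv:OddEven} for BKP/CKP. Your additional remarks on the delta-function localisation of the measure and on interpreting the formal shifts via multi-dimensional consistency are consistent elaborations of what the paper leaves implicit, not a different argument.
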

\begin{proof}
As is mentioned in Subsection \ref{S:DL} the $\tau$-function in the framework of the DL is defined via $\det(1+\bOa\cdot\bC)$ and the dynamical variables 
only appear in the effective plane wave factor $\rho_k\sigma_{k'}$ in the infinite matrix $\bC$, cf. \eqref{Sol}. Thus, we only need to prove in each 
case that 
\begin{align}\label{Reduc:k}
\left(\prod_{j=1}^\cN\rT_{-\oa_j(-p_\gamma)}\right)\rho_k\sigma_{k'}\Big|_{C_\cN(-k',k)=0}=\rho_k\sigma_{k'} \quad \hbox{for} \quad \gamma=1,2,\cdots.
\end{align}
In the case of AKP, namely the plane wave factors given by \eqref{AKP:Sol}, we can calculate that 
\begin{align*}
 \left(\prod_{j=1}^\cN\rT_{-\oa_j(-p_\gamma)}\right)\rho_k\sigma_{k'}=
 \prod_{j=1}^\cN\left(\frac{-\oa_j(-p_\gamma)+k}{-\oa_j(-p_\gamma)-k'}\right)\rho_k\sigma_{k'}=
 \frac{P_\cN(k)-P_\cN(-p)}{P_\cN(-k')-P_\cN(-p)}\rho_k\sigma_{k'}=\rho_k\sigma_{k'},
\end{align*}
where in the last step we used the relation $P_\cN(k)=P_\cN(-k')$, i.e.  the constraint $C_\cN(-k',k)=0$ governed by the curve \eqref{AlgCurv:A}. Likewise, 
in the cases of BKP and CKP, we can from the plane wave factors given by \eqref{BKP:Info} and \eqref{CKP:Info} obtain that  
\begin{align*}
 \left(\prod_{j=1}^\cN\rT_{-\oa_j(-p_\gamma)}\right)\rho_k\sigma_{k'}&=
 \prod_{j=1}^\cN\left(\frac{-\oa_j(-p_\gamma)+k}{-\oa_j(-p_\gamma)-k}\right)\left(\frac{-\oa_j(-p_\gamma)+k'}{-\oa_j(-p_\gamma)-k'}\right)\rho_k\sigma_{k'} \\
 &=\left(\frac{P_\cN(k)-P_\cN(-p)}{P_\cN(-k)-P_\cN(-p)}\right)\left(\frac{P_\cN(k')-P_\cN(-p)}{P_\cN(-k')-P_\cN(-p)}\right)\rho_k\sigma_{k'}=\rho_k\sigma_{k'},
\end{align*}
where the last equality holds because of the constraints given in \eqref{AlgCurv:OddEven}. Thus, in any case Equation \eqref{Reduc:k} is proven and
consequently the symmetry constraint \eqref{Reduc:tau} is obtained. 
\end{proof}

This is the generic reduction on the bilinear/multilinear forms of the discrete KP equations and in the following we show that under such a reduction, 
coupled systems of 2D lattice equations can be obtained. In practice, we define some new variables $\sigma_{i}(p_\gamma)$ as 
\begin{align}\label{sigma}
 \sigma_i(p_\gamma)=\left(\prod_{j=1}^i\rT_{-\oa_j(-p_\gamma)}\right)\tau, \quad i=0,1,2,\cdots,\cN-1.
\end{align}
Following from the reduction \eqref{Reduc:tau} and the definition \eqref{sigma}, one can easily prove that the new variables obey the relations as follows:
\begin{align}\label{sigma:Prop}
 \sigma_0(p_\gamma)=\tau, \quad \sigma_{i}(p_\gamma)=\rT_{-\oa_i(-p_\gamma)}\sigma_{i-1}(p_\gamma), \quad \sigma_{\cN-1}(p_\gamma)=\rT_{p_\gamma}^{-1}\tau.
\end{align}
These variables appear in the reduced 2D lattice equations, namely they are the components in the obtained coupled systems of discrete equations. 

We can now consider a generic 3D discrete integrable equation that takes the form of 
\begin{align}
 \cF_{p,q,r}(\tau,\wt\tau,\wh\tau,\wb\tau,\th\tau,\hb\tau,\bt\tau,\thb\tau)
 =\cF_{p,q,r}(\tau,\wt\tau,\wh\tau,\rT_r\tau,\th\tau,\rT_r\wh\tau,\rT_r\wt\tau,\rT_r\th\tau)=0,
\end{align}
where we denote $p=p_1$, $q=p_2$ and $r=p_3$ according to the convention that we have claimed in Section \ref{S:KP}. Without losing generality, we take 
$r=-\oa_i(-p)$ for $i=1,2,\cdots,\cN-1$ respectively, and as a result we obtain a coupled system of $\cN-1$ discrete equations as follows: 
\begin{align}
 \cF_{p,q,-\oa_i(-p)}(\sigma_{i-1}(p),\wt\sigma_{i-1}(p),\wh\sigma_{i-1}(p),\sigma_{i}(p),
 \th\sigma_{i-1}(p),\wh\sigma_{i}(p),\wt\sigma_{i}(p),\th\sigma_{i}(p))=0, \quad i=1,2,\cdots,\cN-1,
\end{align}
where $\sigma_0(p)=\tau$ and $\sigma_{\cN-1}(p)=\ut\tau$ according to \eqref{sigma:Prop}. This coupled system has effectively $\cN-1$ dependent variables 
$\tau,\sigma_1(p),\cdots\sigma_{\cN-2}(p)$. Likewise, one can also take $r=-\oa_i(-q)$ for $i=1,2,\cdots,\cN-1$ respectively and obtain a coupled system 
in the same form but in terms of the lattice parameter $q$. 

\subsection{Reductions of the discrete AKP equation}\label{S:AKPReduc}
The discrete AKP equation, i.e. the Hirota--Miwa equation, takes the form of \eqref{dAKP}. We can therefore write down its $\cN$-reduction as follows: 
\begin{align}\label{Reduc:dAKP}
 (p-q)\sigma_i\th\sigma_{i-1}+(q+\oa_i(-p))\wt\sigma_{i-1}\wh\sigma_{i}-(\oa_i(-p)+p)\wh\sigma_{i-1}\wt\sigma_{i}=0, \quad i=1,2,\cdots,\cN-1,
\end{align}
where we denote $\sigma_i=\sigma_i(p)$ and in addition $\sigma_{0}(p)=\tau$ and $\sigma_{\cN-1}(p)=\ut\tau$. The coupled system is integrable (similarly 
for the reductions of BKP and CKP) in the sense of the MDC property, or alternatively, in the sense of the existence of a common explicit solution solving 
the whole hierarchy, e.g. soliton solution given below, which is a natural consequence of the DL framework discussed in Subsection \ref{S:MDC}. We also note 
that a coupled system in terms of $q$ can also be obtained similarly and it is compatible with \eqref{Reduc:dAKP} due to the MDC property. 

The $N$-soliton solution of the coupled system (also of the other equations in the hierarchy) is given by $\tau=\det(1+\bA\bM)$. Here the constant matrix 
$\bA$ and the Cauchy matrix $\bM$ are given by (cf. \cite{FN17} as the structure is the same as that in the continuous case, similarly for the results below 
for the reductions of the discrete BKP and CKP equations) 
\begin{align}\label{A:Sol}
 &\bA=\diag(A_{1,1},\cdots,A_{1,N_1};\cdots;A_{j,1},\cdots,A_{j,N_j};\cdots;A_{\cN-1,1},\cdots,A_{\cN-1,N_{\cN-1}}), \nonumber \\
 &\bM=(M_{(j,j'),(i,i')})_{j,i=1,\cdots,\cN-1,j'=1,\cdots,N_j,i'=1,\cdots,N_i}, \quad 
 M_{(j,j'),(i,i')}=\frac{\rho_{k_{i,i'}}\sigma_{-\oa_j(k_{j,j'})}}{k_{i,i'}-\oa_j(k_{j,j'})},
\end{align}
where the plane wave factors take the form of 
\begin{align*}
 &\rho_k=\prod_{\gamma=1}^{\infty}\left[(p_\gamma+k)^{n_\gamma}\prod_{j=1}^{\cN-1}(-\oa_j(-p_\gamma)+k)^{n_\gamma^{(j)}}\right]
 \exp\left(\sum_{\gamma=1}^{\infty}k^\gamma x_\gamma\right), \\
 &\sigma_{k'}=\prod_{\gamma=1}^{\infty}\left[(p_\gamma-k')^{-n_\gamma}\prod_{j=1}^{\cN-1}(-\oa_j(-p_\gamma)-k')^{-n_\gamma^{(j)}}\right]
 \exp\left(\sum_{\gamma=1}^{\infty}(-k')^\gamma x_\gamma\right).
\end{align*}
As long as the structure of the $\tau$-function is given, the $\sigma_i$ can be easily obtained by acting the discrete shifts on the $\tau$-function 
according to the definition \eqref{sigma}. In the following, we give the nontrivial examples for $\cN=2$ and $\cN=3$. 

\paragraph{Discrete Korteweg--de Vries equation ($\cN=2$).}
The case of $\cN=2$ only gives the curve $C_2(-k,k)=(-k')^2-k^2=0$. In this case we take $r=-\oa_1(-p)=-p$ and obtain the first equation in \eqref{dKdV}. 
Alternatively, we can take $r=-\oa_1(-q)=-q$ and then general formula \eqref{Reduc:dAKP} gives rise to the second equation in \eqref{dKdV}. The two equations 
are compatible in the sense of the MDC property, i.e. the common $\tau$-function given by \eqref{A:Sol}. 

\paragraph{Discrete Boussinesq equation ($\cN=3$).}
When $\cN=3$, according to the the general scheme, the variable $\sigma_1$ is needed and the coupled system form arises. We can then take $r=-\oa_1(-p)$ 
and $r=-\oa_2(-p)$ and obtain the coupled system for $\tau$ and $\sigma\doteq\sigma_1(p)$: 
\bse\label{dBSQ:coupled}
\begin{align}
 &(p-q)\sigma\th\tau+(q+\oa_1(-p))\wt\tau\wh\sigma-(\oa_1(-p)+p)\wh\tau\wt\sigma=0, \\
 &(p-q)\ut\tau\th\sigma+(q+\oa_2(-p))\wt\sigma\ut{\wh\tau}-(\oa_2(-p)+p)\wh\sigma\tau=0.
\end{align}
\ese
We refer to this system as the extended discrete BSQ equation. Alternatively, we can also derive a coupled system in a similar form for $\tau$ and 
$\sigma_1(q)$, which is compatible with \eqref{dBSQ:coupled}. 

We also note that eliminating $\sigma$ in \eqref{dBSQ:coupled} gives us the trilinear extended discrete BSQ equation \eqref{edBSQ2}; the $\tau$-function 
for the BSQ equation solves the coupled system \eqref{dBSQ:coupled} and the trilinear equation \eqref{edBSQ2} simultaneously. The degeneration $\alpha_1=0$ 
gives the unextended version of Equation \eqref{dBSQ:coupled} and it is corresponding to \eqref{dBSQ}.

\subsection{Reductions of the discrete BKP equation}\label{S:BKPReduc}
The $\cN$-reduction of the discrete BKP equation \eqref{dBKP} following from the general scheme gives us the following coupled system 
(if one takes $r=-\oa_i(-p)$ for $i=1,2,\cdots,\cN-1$ without loss of generality): 
\begin{align}\label{Reduc:dBKP}
 &(p-q)(q+\oa_i(-p))(\oa_i(-p)+p)\sigma_{i-1}\th\sigma_i-(p+q)(p-\oa_i(-p))(q+\oa_i(-p))\wt\sigma_{i-1}\wh\sigma_i \nonumber \\
 &\quad-(\oa_i(-p)-p)(\oa_i(-p)-q)(p-q)\sigma_i\th\sigma_{i-1}+(q-\oa_i(-p))(q+p)(\oa_i(-p)+p)\wh\sigma_{i-1}\wt\sigma_{i}=0
\end{align}
for $i=1,2,\cdots,\cN-1$, where we denote $\sigma_i\doteq\sigma_i(p)$ for convenience and $\sigma_0(p)=\tau$ and $\sigma_{\cN-1}=\ut\tau$. 
Alternatively, one can also take $r=-\oa_i(-q)$ for $i=1,2,\cdots,\cN-1$ and obtain a similar coupled system associated with the lattice parameter $q$, 
which is compatible with \eqref{Reduc:dBKP} due to the MDC property. 

The coupled system has the $N$-soliton solution $\tau$ determined by 
\begin{align}
 \tau^2=\det\Bigg[1+
 \left(
 \begin{array}{cc}
  \bA & 0 \\
  0 & -\bA
 \end{array}
 \right)
 \left(
 \begin{array}{cc}
  \bM & 0 \\
  0 & \bM'
 \end{array}
 \right)
 \Bigg],
\end{align}
where the matrix $\bA$ is given by 
\begin{align*}
 \bA=\diag(A_{1,1},\cdots,A_{1,N_1};\cdots;A_{j,1},\cdots,A_{j,N_j};\cdots;A_{\cN-1,1},\cdots,A_{\cN-1,N_{\cN-1}}),
\end{align*}
and the Cauchy matrices $\bM$ and $\bM'$ take the form of 
\bse\label{B:M}
\begin{align}
 &\bM=(M_{(j,j'),(i,i')})_{j,i=1,\cdots,\cN-1,j'=1,\cdots,N_j,i'=1,\cdots,N_i}, \, 
 M_{(j,j'),(i,i')}=\rho_{k_{i,i'}}\frac{1}{2}\frac{k_{i,i'}+\oa_j(k_{j,j'})}{k_{i,i'}-\oa_j(k_{j,j'})}\rho_{-\oa_j(k_{j,j'})}, \\
 &\bM'=(M_{(j,j'),(i,i')}')_{j,i=1,\cdots,\cN-1,j'=1,\cdots,N_j,i'=1,\cdots,N_i}, \,
 M_{(j,j'),(i,i')}'=\rho_{-\oa_i(k_{i,i'})}\frac{1}{2}\frac{\oa_i(k_{i,i'})+k_{j,j'}}{\oa_i(k_{i,i'})-k_{j,j'}}\rho_{k_{j,j'}},
\end{align}
\ese
respectively. Here the plane wave factor $\rho_k$ is given by 
\begin{align*}
 \rho_k=\prod_{\gamma=1}^{\infty}\left[\left(\frac{p_\gamma+k}{p_\gamma-k}\right)^{n_\gamma}\prod_{j=1}^{\cN-1}\left(\frac{-\oa_j(-p_\gamma)+k}{-\oa_j(-p_\gamma)-k}\right)^{n_\gamma^{(j)}}\right]
 \exp\left(\sum_{\gamma=0}^{\infty}k^{2\gamma+1}x_{2\gamma+1}\right).
\end{align*}
Below we give an example for $\cN=3$, which is the discrete SK equation in extended form. 

\paragraph{Discrete Sawada--Kotera equation ($\cN=3$).}
The coupled system that describes the structure of the extended discrete SK equation according to the above scheme is given by 
\bse\label{dSK:coupled}
\begin{align}
 &(p-q)(q+\oa_1(-p))(\oa_1(-p)+p)\tau\th\sigma-(p+q)(p-\oa_1(-p))(q+\oa_1(-p))\wt\tau\wh\sigma \nonumber \\
 &\quad-(\oa_1(-p)-p)(\oa_1(-p)-q)(p-q)\sigma\th\tau+(q-\oa_1(-p))(q+p)(\oa_1(-p)+p)\wh\tau\wt\sigma=0, \\
 &(p-q)(q+\oa_2(-p))(\oa_2(-p)+p)\sigma\wh\tau-(p+q)(p-\oa_2(-p))(q+\oa_2(-p))\wt\sigma\ut{\wh\tau} \nonumber \\
 &\quad-(\oa_2(-p)-p)(\oa_2(-p)-q)(p-q)\ut\tau\th\sigma+(q-\oa_2(-p))(q+p)(\oa_2(-p)-p)\wh\sigma\tau=0,
\end{align}
\ese
where we take $r=-\oa_1(-p)$ and $r=-\oa_2(-p)$ respectively in the discrete BKP equation and denote $\sigma\doteq\sigma_1(p)$. 
Similarly one can also derive a compatible coupled system for $\tau$ 
and $\sigma_1(q)$. It is not yet clear whether or not one can eliminate the $\sigma$-function from the coupled systems and express the extended discrete SK 
equation by only the $\tau$-function as a scalar multilinear equation. The degeneration $\alpha_1=\alpha_2=0$ is corresponding to the unextended version 
of the coupled system \eqref{dSK:coupled}, which should be referred to as the discrete SK equation. 

We also note that a different reduction $\thb\tau=\tau$ was considered in \cite{HZ13} (see also \cite{TH96} for a slightly different reduction) 
for the discrete BKP equation \eqref{dBKP} resulting in the following bilinear equation: 
\begin{align}
 &(p-q)(q-r)(r-p)\tau^2+(p+q)(p+r)(q-r)\wt\tau\ut\tau \nonumber \\
 &\quad+(r+p)(r+q)(p-q)\uth\tau\th\tau+(q+r)(q+p)(r-p)\wh\tau\uh\tau=0.
\end{align}
However, the algebraic curve for the spectral parameters $k$ and $k'$ behind such a reduction is 
\begin{align}\label{MultiQuad}
 \left(\frac{p_1+k}{p_1-k}\right)\left(\frac{p_2+k}{p_2-k}\right)\left(\frac{p_3+k}{p_3-k}\right)
 \left(\frac{p_1+k'}{p_1-k'}\right)\left(\frac{p_2+k'}{p_2-k'}\right)\left(\frac{p_3+k'}{p_3-k'}\right)=1,
\end{align}
which is effectively an algebraic curve biquadratic in $k$ and $k'$ and it does not match \eqref{AlgCurve}. This implies that the obtained bilinear equation 
does not possess the structure of the SK equation, but it is still an integrable equation having different nonlinear behaviour subject to the algebraic curve 
\eqref{MultiQuad}. Differently, in the BSQ case the algebraic curve associated with the reduction $\thb\tau=\tau$ is a cubic curve, i.e. \eqref{dBSQ:DJMReduc}, 
and hence the reduced equation \eqref{dBSQ:DJM} can still be thought of a discretisation of the BSQ equation, although it is a truncation.

\subsection{Reductions of the discrete CKP equation}\label{S:CKPReduc}
Finally, we consider the $\cN$-reduction of the discrete CKP equation. If one takes $r=-\oa_i(-p)$ for $i=1,2,\cdots,\cN-1$ (or alternatively $r=-\oa_i(-q)$ 
is also allowed), the general reduction scheme in the case of the discrete CKP provides us with the following coupled system of lattice equations: 
\begin{align}\label{Reduc:dCKP}
 &\Big[(p-q)^2(q+\oa_i(-p))^2(\oa_i(-p)+p)^2\sigma_{i-1}\th\sigma_{i}+(p+q)^2(p-\oa_i(-p))^2(q+\oa_i(-p))^2\wt\sigma_{i-1}\wh\sigma_{i} \nonumber \\
 &\quad-(q-\oa_i(-p))^2(q+p)^2(\oa_i(-p)+p)^2\wh\sigma_{i-1}\wt\sigma_{i}-(\oa_i(-p)-p)^2(\oa_i(-p)-q)^2(p-q)^2\sigma_{i}\th\sigma_{i-1}\Big]^2 \nonumber \\
 &\qquad-4(p^2-q^2)^2(p^2-(\oa_i(-p))^2)^2\Big[(q-\oa_i(-p))^2\th\sigma_{i-1}\wt\sigma_{i}-(q+\oa_i(-p))^2\wt\sigma_{i-1}\th\sigma_{i}\Big] \nonumber \\
 &\quad\qquad\times\Big[(q-\oa_i(-p))^2\wh\sigma_{i-1}\sigma_{i}-(q+\oa_i(-p))^2\sigma_{i-1}\wh\sigma_{i}\Big]=0 \quad 
 \hbox{for} \quad i=1,2,\cdots,\cN-1,
\end{align}
where we still have from \eqref{sigma:Prop} that $\sigma_0(p)=\tau$ and $\sigma_{\cN-1}(p)=\ut\tau$ and denote $\sigma_i\doteq\sigma_i(p)$. And again we 
note that the coupled system for $\tau$ and $\sigma_i(q)$ is compatible with \eqref{Reduc:dCKP}. 

The soliton-type solution to the coupled system \eqref{Reduc:dCKP} (also to its hierarchy) takes the form of 
\begin{align}\label{C:BLSoliton}
 \tau=\det\Bigg[1+
 \left(
 \begin{array}{cc}
  \bA & 0 \\
  0 & \bA
 \end{array}
 \right)
 \left(
 \begin{array}{cc}
  \bM & 0 \\
  0 & \bM'
 \end{array}
 \right)
 \Bigg]
\end{align}
with the same matrix $\bA$ in Subsection \ref{S:BKPReduc} and the Cauchy matrices $\bM$ and $\bM'$ given by 
\bse\label{C:M}
\begin{align}
 &\bM=(M_{(j,j'),(i,i')})_{j,i=1,\cdots,\cN-1,j'=1,\cdots,N_j,i'=1,\cdots,N_i}, \quad 
 M_{(j,j'),(i,i')}=\frac{\rho_{k_{i,i'}}\rho_{-\oa_j(k_{j,j'})}}{k_{i,i'}-\oa_j(k_{j,j'})}, \\
 &\bM'=(M_{(j,j'),(i,i')})_{j,i=1,\cdots,\cN-1,j'=1,\cdots,N_j,i'=1,\cdots,N_i}, \quad 
 M_{(j,j'),(i,i')}'=\frac{\rho_{-\oa_i(k_{i,i'})}\rho_{k_{j,j'}}}{-\oa_i(k_{i,i'})+k_{j,j'}},
\end{align}
\ese
where $\rho_k$ is also exactly the same as the one given in Subsection \ref{S:BKPReduc}. The cases when $\cN=3$ and $\cN=4$ would be quite interesting in 
this class since they give rise to discretisations of the KK and HS equations in extended form and such results (even the unextended cases) have not 
yet be given elsewhere, to the best of the authors' knowledge. 

\paragraph{Discrete Kaup--Kupershmidt equation ($\cN=3$).}
The discrete KK equation can be obtained as a two-component system for $\tau$ and $\sigma\doteq\sigma_{1}(p)$ if one takes $r=-\oa_1(-p)$ and $r=-\oa_2(-p)$ 
respectively in the discrete CKP equation. The extended discrete KK equation takes the form of
\bse\label{dKK:coupled}
\begin{align}
 &\Big[(p-q)^2(q+\oa_1(-p))^2(\oa_1(-p)+p)^2\tau\th\sigma+(p+q)^2(p-\oa_1(-p))^2(q+\oa_1(-p))^2\wt\tau\wh\sigma \nonumber \\
 &\quad-(q-\oa_1(-p))^2(q+p)^2(\oa_1(-p)+p)^2\wh\tau\wt\sigma-(\oa_1(-p)-p)^2(\oa_1(-p)-q)^2(p-q)^2\sigma\th\tau\Big]^2 \nonumber \\
 &\qquad-4(p^2-q^2)^2(p^2-(\oa_1(-p))^2)^2\Big[(q-\oa_1(-p))^2\th\tau\wt\sigma-(q+\oa_1(-p))^2\wt\tau\th\sigma\Big] \nonumber \\
 &\quad\qquad\times\Big[(q-\oa_1(-p))^2\wh\tau\sigma-(q+\oa_1(-p))^2\tau\wh\sigma\Big]=0, \\
 &\Big[(p-q)^2(q+\oa_2(-p))^2(\oa_2(-p)+p)^2\sigma\wh\tau+(p+q)^2(p-\oa_2(-p))^2(q+\oa_2(-p))^2\wt\sigma\ut{\wh\tau} \nonumber \\
 &\quad-(q-\oa_2(-p))^2(q+p)^2(\oa_2(-p)+p)^2\wh\sigma\tau-(\oa_2(-p)-p)^2(\oa_2(-p)-q)^2(p-q)^2\ut\tau\th\sigma\Big]^2 \nonumber \\
 &\qquad-4(p^2-q^2)^2(p^2-(\oa_2(-p))^2)^2\Big[(q-\oa_2(-p))^2\th\sigma\tau-(q+\oa_2(-p))^2\wt\sigma\wh\tau\Big] \nonumber \\
 &\quad\qquad\times\Big[(q-\oa_2(-p))^2\wh\sigma\ut\tau-(q+\oa_2(-p))^2\sigma\ut{\wh\tau}\Big]=0.
\end{align}
\ese
Alternatively, a compatible system for $\tau$ and $\sigma_1(q)$ can also be derived if one takes $r=-\oa_1(-q)$ and $r=-\oa_2(-q)$ respectively 
in the discrete CKP equation. The unextended KK equation is a particular case of \eqref{dKK:coupled} when $\alpha_1=\alpha_2=0$. 

Similarly to the discrete SK equation, the reduction $\thb\tau=\tau$ together with the discrete CKP equation \eqref{dCKP} gives rise to a very simple 
scalar quadrilinear discrete equation 
\begin{align}
 &\Big[(p-q)^2(q-r)^2(r-p)^2\tau^2+(p+q)^2(p+r)^2(q-r)^2\wt\tau\ut\tau \nonumber \\
 &\quad-(q+r)^2(q+p)^2(r-p)^2\wh\tau\uh\tau-(r+p)^2(r+q)^2(p-q)^2\uth\tau\th\tau\Big]^2 \nonumber \\
 &\qquad-4(p^2-q^2)^2(p^2-r^2)^2\Big[(q+r)^2\th\tau\uh\tau-(q-r)^2\wt\tau\tau\Big]\Big[(q+r)^2\wh\tau\uth\tau-(q-r)^2\tau\ut\tau\Big]=0.
\end{align}
But its solution structure does not reflect the structure of the KK equation since the reduction is still associated with the biquadratic algebraic 
curve \eqref{MultiQuad} rather than the algebraic curve $C_3(-k,k)=0$ determined by \eqref{AlgCurv:BC}.

\paragraph{Discrete Hirota--Satsuma equation ($\cN=4$).} The discrete HS equation is a three-component system for $\tau$, $\sigma_{1}(p)$ and $\sigma_{2}(p)$ 
(or consistently a three-component system for $\tau$, $\sigma_1=\sigma_{1}(q)$ and $\sigma_2=\sigma_{2}(q)$ in terms of the lattice direction associated with 
the parameter $q$). In practice, one can take $r=-\oa_i(-p)$ for $i=1,2,3$ respectively in the discrete CKP without losing generality. We omit the explicit 
formulae here and one can refer to the case of $\cN=4$ in \eqref{Reduc:dCKP} , which is the extended HS equation. And the HS equation is the case when 
$\alpha_1=\alpha_2=0$, i.e. the unextended case. 

\bigskip

Finally, we would like to point out that the coupled systems we have listed in this section are the integrable discretisations of their corresponding 
continuous hierarchies in the sense that the respective continuum limits of the $\tau$-functions lead to the continuous ones given in \cite{FN17}. 
The limit procedure on the discrete equations might be slightly subtle -- one can calculate the continuum limits of the discrete equations first and 
then eliminate the variables $\sigma_i$. 

\section{Discussions}\label{S:Concl}
Reductions of the discrete AKP, BKP and CKP equations are considered in a unified way. For lower-rank examples, the coupled systems could be reformulated 
into scalar equations which only involve the $\tau$-function, for instance, the discrete KdV and BSQ equations can be written as the bilinear equation 
\eqref{dKdV} and the trilinear equation \eqref{dBSQ} respectively. However, this is highly nontrivial, especially in the coupled systems as the reductions 
of the discrete CKP equation due to the complexity of its quadrilinear form. We believe that the coupled system form is a proper way to express the 2D 
reduced discrete equations since a unified expression can be written down explicitly. 

We only considered the reductions on the bilinear/multilinear form of the discrete AKP, BKP and CKP equations since the $\tau$-function is 
the best candidate to describe the solution structure (namely, the effective plane wave factor and the Cauchy kernel) of an integrable equation/hierarchy. 
An interesting question would be looking for the nonlinear forms of the obtained 2D lattice integrable equations. This can be done via the DL framework and 
successful examples include the discrete KdV equation \cite{HJN16} and the discrete BSQ equation \cite{ZZN12}. However, the nonlinear forms for the 2D lattice 
equations arising from the reductions of the discrete BKP and CKP equations are not yet clear. In fact, on the continuous level there are some remarkable 
transforms between the SK and KK equations \cite{FG80a,FG80b} (see also \cite{FN17}). Nevertheless, no such result has been found on the discrete level 
(even not in the semi-discrete case according to \cite{AP11}). In addition, there exists a trilinear discrete Tzitzeica equation \cite{Sch99} (see also 
\cite{Adl11}). It is also not clear if this equation is related to the discrete SK equation \eqref{dSK:coupled}. 

Finally, it was pointed out in \cite{OST10} that the famous pentagram map is deeply related to the discrete BSQ equation in the sense that the pentagram map 
leads to the continuous BSQ equation in continuum limit. Recently, it was shown by Hietarinta and Maruno \cite{Hie15} that the pentagram map as a discrete 
system can be bilinearised and its bilinear form arises as a reduction of the Hirota--Miwa equation. However, it was not clarified how the full lattice BSQ 
equation, i.e. the trilinear equation \eqref{dBSQ}, is related to the pentagram map. 

\paragraph{Acknowledgements.}
WF would like to thank Allan Fordy for an introduction to the continuous Sawada--Kotera and Kaup--Kupershmidt equations from the view point of Lax structure, 
and he was supported by a Leeds International Research Scholarship (LIRS) as well as a small grant from the School of Mathematics. 
FWN was partially supported by EPSRC (Ref. EP/I038683/1).
Both authors are grateful to Dajun Zhang for the hospitality during their visit at Shanghai University where the project was initiated. 
\small

\end{document}